\documentclass[runningheads]{llncs}

\usepackage[english]{babel}
\usepackage{tabularx,lipsum}
\usepackage{pdfsync}
\usepackage{amssymb}
\usepackage{xspace}
\usepackage{amsmath}
\usepackage{latexsym}
\usepackage{epsfig}
\usepackage{subfigure}
\usepackage{times}
\usepackage{enumerate}
\usepackage{tabularx}
\usepackage{caption}
\usepackage[plainpages=false]{hyperref}
\usepackage{multirow}
\usepackage[usenames,dvipsnames]{color}
\usepackage{hyphenat}
\usepackage{amsfonts}
\usepackage{dsfont}
\usepackage{paralist}
\usepackage{mfirstuc}
\usepackage[color=yellow,textsize=scriptsize]{todonotes}
\usepackage{marginnote}
\usepackage{cooltooltips}

\usepackage{tipa}
\usepackage[full]{textcomp}
\usepackage{ upgreek }
\newcommand{\remove}[1]{}

\definecolor{blue}{rgb}{0.274,0.392,0.666}
\definecolor{red}{rgb}{0.627,0.117,0.156}
\definecolor{green}{rgb}{0,0.588,0.509}

\makeatletter
\@ifclassloaded{elsarticle}{

\newtheorem{theorem}{Theorem}

\newtheorem{corollary}{Corollary}
\newenvironment{proof}{{\em Proof.}}
}{}
\makeatother


\newcommand{\red}[1]{{\color{red}{#1\xspace}}}
\newcommand{\blue}[1]{{\color{blue}{#1\xspace}}}
\newcommand{\NP}{$\mathcal{NP}$\xspace}

\newcommand{\NPC}{\mbox{\NP-complete}\xspace}

\newcommand{\Er}{\textcolor{red}{$E_1$}\xspace}
\newcommand{\Eb}{\textcolor{blue}{$E_2$}\xspace}

\newcommand{\Gint}{$G_{\cap}$\xspace}

\newcommand{\Gr}{\textcolor{red}{$G_1$}\xspace}
\newcommand{\GammaR}{\textcolor{red}{$\Gamma_1$}\xspace}

\newcommand{\Gb}{\textcolor{blue}{$G_2$}\xspace}
\newcommand{\GammaB}{\textcolor{blue}{$\Gamma_2$}\xspace}

\newcommand{\sefeinstance}{$\langle\textcolor{red}{G_1},\textcolor{blue}
{G_2}\rangle$\xspace}



\newcommand{\cpp}{{\scshape C-Planarity}\xspace}
\newcommand{\sefep}{{\scshape SEFE-$2$}\xspace}
\newcommand{\cksefep}{{\scshape C-SEFE-$k$}\xspace}
\newcommand{\csefep}{{\scshape C-SEFE-$2$}\xspace}

\newcommand{\pTHREEpbepshort}{{\scshape PBE-$3$}\xspace}

\newcommand{\ptckpbepshort}{{\scshape PTBE-$k$}\xspace}
\newcommand{\ptcTWOpbepshort}{{\scshape PTBE-$2$}\xspace}

\newcommand{\ptcTWOpbeinstance}{$\langle T, \textcolor{red}{E_1},\textcolor{blue}{E_2} \rangle$\xspace}

\newcommand{\sefesolution}{$\langle\red{\Gamma_1},\blue{ \Gamma_2 } \rangle$\xspace}

\newif\ifllncsvar

\makeatletter \@ifclassloaded{llncs}{ \llncsvartrue \author{Patrizio
    Angelini and Giordano {Da Lozzo}}

  \title{Deepening the Relationship between\\ SEFE and C-Planarity}
  \institute{
    Department of Engineering, Roma Tre University, Italy\\
    \email{\{angelini,dalozzo\}@dia.uniroma3.it} }
  \bibliographystyle{splncs03}}{
  \usepackage{amssymb} \usepackage[figuresright]{rotating}

}
\makeatother

\let\nu\undefined \DeclareMathOperator{\nu}{\theta}

\authorrunning{Angelini and {Da Lozzo}}
\titlerunning{Deepening the Relationship between SEFE and C-Planarity}

\begin{document}
\ifllncsvar
\maketitle
\else
\begin{frontmatter}
\fi

\begin{abstract} 
  In this paper we deepen the understanding of the connection
  between two long-standing Graph Drawing open problems, that is,
  Simultaneous Embedding with Fixed Edges ({\sc SEFE}) and Clustered
  Planarity (\cpp).  In his GD'12 paper Marcus Schaefer presented a
  reduction from \cpp to SEFE of two planar graphs (\sefep).  We prove
  that a reduction exists also in the opposite direction, if we
  consider instances of \sefep in which the intersection graph is
  connected.  We pose as an open question whether the two problems are
  polynomial-time equivalent.
\end{abstract}

\ifllncsvar \else

\author{Patrizio
  Angelini}
\author{Giordano {Da
    Lozzo}}


\address{{\small Department of Engineering, Roma Tre University, Italy}\\
  {\{angelini,dalozzo\}@dia.uniroma3.it}}

\begin{keyword} {\sc SEFE} \sep \cpp \sep Graph Drawing \sep
  computational complexity \sep polynomial-time reducibility
\end{keyword}

\end{frontmatter}
\fi

\section{Introduction}\label{se:introduction}

In recent years the problem of displaying together multiple
relationships among the same set of entities has turned into a central
subject of research in Graph Drawing and Visualization. In this
context, the two major paradigms that held the stage
are the simultaneous embedding of graphs, in which the relationships
are described by means of different sets of edges among the same set
of vertices, and the visualization of clustered graphs, in which the
relationships are described by means of a set of edges and of a
cluster hierarchy grouping together vertices with semantic affinities.

We study the connection between two problems adhering to such
paradigms, {\em Simultaneous Embedding with Fixed Edges} ({\sc SEFE}) and
{\em Clustered Planarity} (\cpp). Given $k$ graphs
$G_1(V,E_1),\dots,G_k(V,E_k)$ the {\sc SEFE-$k$} problem asks whether
there exist $k$ planar drawings $\Gamma_i$ of $G_i$, with
$i=1,\dots,k$, such that: (i) any vertex $v\in V$ is mapped to the
same point in any $\Gamma_i$; (ii) any edge $e \in E_i \cap E_j$ is
mapped to the same curve in $\Gamma_i$ and $\Gamma_j$
(see~\cite{bkr-sepg-12} for a comprehensive survey on this topic).
Given a graph $G(V,E)$ and a cluster hierarchy over $V$,
the \cpp problem asks whether a planar drawing of $G$ exists such that
each cluster can be drawn as a simple region enclosing all and only
the vertices belonging to it without introducing unnecessary
intersections involving clusters and edges
(see~\cite{FengCE95,df-ectefcgsf-09}).

Due to their practical relevance and their theoretical appealing,
these problems have attracted a great deal of effort in the research
community. However, despite several restricted cases have been
successfully settled, the question regarding the computational
complexity of the original problems keeps being as elusive as their
charm.

In a recent work~\cite{s-ttphtpv-13}, Marcus Schaefer leveraged the
expressive power of {\sc SEFE-$k$} to generalize, in terms of
polynomial-time reducibility, several graph drawing problems,
including \cpp.  On the other hand, also \cpp has shown a significant
expressive power as it generalizes relevant problems, like {\sc Strip
  Planarity}~\cite{addf-spt-13}; most notably, two special cases of
\cpp and \sefep have been proved to be polynomial-time
equivalent~\cite{hn-tpbecgp-09,adfpr-tsetgibgt-11}, that is, \cpp with
two clusters and \sefep where the common graph is a star. Motivated by
such results, we pose the question whether this equivalence extends to
the general case. 

In this paper we take a first step in this
direction, by proving that \csefep, that is the restriction of \sefep
to instances with connected common graph, reduces to \cpp.

\remove{ Deciding whether $k$ graphs admit a {\sc SEFE-$k$} is an \NPC
  problem~\cite{gjpss-sgefe-06}, with $k\geq 3$, even when every pair
  of graphs shares the same common graph ({\sc Sunflower
    SEFE})~\cite{s-ttphtpv-13}.  Most recently, Angelini {et
    al.}~\cite{adn-osnpsp-14,adn-otcosprts-13} showed that {\sc
    Sunflower SEFE} remains computationally hard when the common graph
  is connected even if
  \begin{inparaenum}[(i)]
  \item all input graphs are $2$-connected and \red{the common graph
      is a caterpillar tree};
  \item two out of three graphs are $2$-connected, the intersection
    graph is a caterpillar tree and the exclusive edges of each graph
    connect only leaves of the tree (\ptcTRHEEpbep); or
  \item the intersection graph is a star graph
    (\pTHREEpbepshort)\footnote{not yet published on arXiv
      $\rightarrow$ let's do it!}.
  \end{inparaenum}

  In this paper we focus on instances \sefeinstance of \sefep whose
  common graph $G_\cap=(V,\red{E_1} \cap \blue{E_2})$ is connected
  (\csefep). Observe that, \sefep is a special case of {\sc Sunflower
    SEFE}.  In this setting, polynomial-time algorithms are known for
  several restricted cases.  Namely, the existence of a \csefep can be
  tested in linear-time when
  \begin{inparaenum}[(i)]
  \item \Gint is a star graph~\cite{hn-tpbecgp-09,add-i2pbe-12} or a
    binary tree~\cite{s-ttphtpv-13},
  \item \Gint is a tree $T$ with leaves $\mathcal{L}(T)$ and each
    graph $G_i=(\mathcal{L}(T), E_i)$ is connected, with
    $i=\{\red{1},\blue{2}\}$~\cite{hoske-befpa-12}, and \item \Gint is
    $2$-connected~\cite{adfpr-tsetgibgt-11,hjl-tspcg2c-10}.
  \end{inparaenum}
  Also, a quadratic-time testing algorithm exists for \csefep if $G_i$
  is $2$-connected, with
  $i=\{\red{1},\blue{2}\}$~\cite{br-spqoacep-13}.

  As a main result of the paper we prove that a polynomial-time
  reduction exists from \csefep to \cpp.  ...

}

The paper is structured as follows. In Section~\ref{se:preliminaries}
we give basic definitions. In Section~\ref{se:reduction} we show a
polynomial-time reduction from \csefep to \cpp. Finally, in
Section~\ref{se:conclusions} we give conclusive remarks and present
some open problems.

\section{Preliminaries}\label{se:preliminaries}
\ifllncsvar
A graph $G=(V,E)$ is a pair, where $V$ is the set of vertices and
$E\subseteq V^2$ is the set of edges. A graph without self-loops and
multi-edges is called {\em simple}. In the following, we will only
consider simple graphs. The {\em degree} of a vertex is the number of
edges incident to it.

A graph is \emph{connected} if every two vertices are connected by a
path. A $tree$ $T$ is a minimally connected graph. Namely, for each
two vertices there exists exactly one path connecting them. The
degree-$1$ vertices of $T$ are {\em leaves}, while the other vertices
are {\em internal} vertices. We denote the set of leaves by
$\mathcal{L}(T)$.

A \emph{drawing} of a graph is a mapping of each vertex to a point of
the plane and of each edge to a simple curve connecting its endpoints.
A drawing is \emph{planar} if the curves representing its edges do not
cross except, possibly, at common endpoints. A graph is \emph{planar}
if it admits a planar drawing.  A planar drawing partitions the plane
into topologically connected regions called {\em faces}.  Two planar
drawings are said to be equivalent if they determine the same circular
order of edges around each vertex.  An equivalence class of planar
drawings is called an {\em embedding}.  \else We assume familiarity
with basic definitions of planar graphs and embeddings
(see~\cite{dett-gd-99,tamax-hgdv-07}).  \fi

Given $k$ planar graphs $G_1(V,E_1),\dots,G_k(V,E_k)$ such that $E_i
\cap E_j = \emptyset$, with $1\leq i<j\leq k$, a {\em $k$-page
  book-embedding} of graphs $G_i$ consists of a linear ordering
$\mathcal{O}$ of $V$ such that for every set $E_i$ there exist no two
edges $e_1,e_2 \in E_i$ whose endvertices alternate in $\mathcal{O}$.
The {\sc Partitioned T-Coherent $k$-Page Book Embedding} problem
(\ptckpbepshort) takes as input a rooted tree $T$ with leaves
$\mathcal{L}(T)$ and $k$ sets $E_1,\dots,E_k$ of edges among leaves such that
$E_i \cap E_j = \emptyset$, with $1\leq i<j\leq k$, and asks whether a
$k$-page book-embedding $\mathcal{O}$ of graphs $G_i =
(\mathcal{L}(T), E_i)$ exists such that $\mathcal{O}$ is represented
by $T$.  It is easy to verify that instance $\langle T, E_1,\dots,E_k
\rangle$ admits a partitioned T-coherent $k$-page book-embedding if
and only if graphs $G_i=(V(T),E(T) \cup E_i)$ admit a \cksefep.

Since \csefep and \ptcTWOpbepshort have been proved to be
polynomial-time equivalent~\cite{adfpr-tsetgibgt-11}, in order to simplify the
description, in the following we will denote an instance
\ptcTWOpbeinstance of \ptcTWOpbepshort by the corresponding instance
\sefeinstance of \csefep, where $\red{G_1}=(V(T),E(T) \cup \red{E_1})$
and $\blue{G_2}=(V(T),E(T) \cup \blue{E_2})$, and vice versa.

The \cpp problem, introduced by Feng {\em et al.}~\cite{fce-hdpcg-95},
takes as input a {\em clustered graph} $C(G,\mathcal{T})$, that is a
planar {\em underlying graph} $G$ together with a {\em cluster
  hierarchy} $\mathcal{T}$, that is a rooted tree whose leaves are the
vertices of $G$. Each internal node $\mu$ of $\mathcal{T}$ is called
{\em cluster} and is associated with the leaves of the subtree
$\mathcal{T}(\mu)$ of $\mathcal{T}$ rooted at $\mu$.  The problem asks
whether a {\em $c$-planar drawing} of $C(G,\mathcal{T})$ exists, that
is a planar drawing of $G$ together with a drawing of each cluster
$\mu$ as a simple region $R(\mu)$ such that:
\begin{inparaenum}
\item $R(\mu)$ encloses all and only the leaves of $\mathcal{T}(\mu)$
  and the regions representing the internal nodes of
  $\mathcal{T}(\mu)$;
\item $R(\mu) \cap R(\nu) \neq \emptyset$ if and only if $\nu$ is an
  internal node of $\mathcal{T}(\mu)$; and
\item each edge $(u,v)$ of $G$ intersects $R(\mu)$ at most once.
\end{inparaenum}
A clustered graph $C(G,\mathcal{T})$ is {\em flat} if $\mathcal{T}$ is
a tree of height $2$ (that is, removing all the leaves yields a star
graph) and {\em non-flat} otherwise.

\section{Reduction}\label{se:reduction}

In this section we prove the main result of the paper. To ease the
description, we first prove in Theorem~\ref{th:non-flat} that \csefep
reduces to \cpp, where the constructed instance of \cpp is
non-flat. We give an high level view of the reduction. Due to the
equivalence between \csefep and
\ptcTWOpbepshort~\cite{adfpr-tsetgibgt-11}, the reduction is performed
on instances of \ptcTWOpbepshort. The proof exploits two clusters to
enforce the placement of the edges of the reduced instance on the
pages they are assigned to (similarly to the technique used
in~\cite{hn-tpbecgp-09} to reduce \ptcTWOpbepshort in which $T$ is a
star to \cpp) and a suitable cluster hierarchy to represent the
constraints on the book-embedding imposed by $T$. Then,
Theorem~\ref{th:flat} states that the reduction of
Theorem~\ref{th:non-flat} can be extended to obtain flat instances
whose underlying graph is a set of paths.

\begin{theorem}\label{th:non-flat}
  \csefep $\propto$ \cpp.
\end{theorem}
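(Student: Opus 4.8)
The plan is to reduce from \ptcTWOpbepshort, which is legitimate by the stated equivalence with \csefep. So I am given a rooted tree $T$ with leaves $\mathcal{L}(T)$ and two disjoint edge sets $\red{E_1},\blue{E_2}$ among the leaves, and I must build a clustered graph $C(G,\mathcal{T})$ that is $c$-planar if and only if the book-embedding instance is positive. The guiding intuition, following the star-case construction of~\cite{hn-tpbecgp-09}, is that a $2$-page book embedding corresponds to placing all vertices on a spine (a horizontal line) and routing each edge either above or below it; the two half-planes play the role of the two pages. I would therefore aim to encode ``which page an edge uses'' by a two-cluster gadget, and encode ``the spine order must be a linearization consistent with $T$'' by a cluster hierarchy that mirrors $T$.

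\medskip

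\noindent First I would build the vertex set and the spine. For each leaf $\ell\in\mathcal{L}(T)$ I create a vertex; to force these spine vertices into a prescribed ordering I would chain them together with a path (or a cycle) so that in any planar drawing their cyclic/linear order is essentially fixed up to the freedom the clusters permit. Next, to simulate the two pages, I introduce two clusters, call them a ``top'' cluster and a ``bottom'' cluster, together with a global enclosing structure, so that an edge of $\red{E_1}$ can be routed only on one side and an edge of $\blue{E_2}$ only on the other: concretely, I would attach each edge-endpoint to the appropriate cluster region so that $c$-planarity forces the edge into the half-plane associated with its page. This is exactly the mechanism by which non-crossing-on-a-page translates into the no-two-edges-alternate condition of the book embedding. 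The key point is that two edges assigned to the same page cross in the drawing precisely when their endpoints alternate along the spine, and the cluster region will make such a crossing a $c$-planarity violation.

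\medskip

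\noindent The more delicate half of the construction is encoding the constraint that the spine order be \emph{represented by $T$}, i.e.\ that it be one of the linear orders obtained by ordering children at each internal node of $T$. Here I would exploit the freedom to build a \emph{non-flat} cluster hierarchy $\mathcal{T}$ that is a copy of $T$: each internal node of $T$ becomes a cluster whose region must enclose exactly the spine vertices descending from it. Because $c$-planarity forbids a cluster region from being entered more than once by any edge and forbids region overlaps except for nesting, the nested regions force the spine vertices belonging to each subtree of $T$ to appear consecutively along the spine, while leaving free the relative order of sibling subtrees --- which is precisely the set of orders ``represented by $T$.'' I would then argue the two directions: from a valid book embedding I read off a spine order and a page assignment and realize the clusters as nested intervals around the spine, obtaining a $c$-planar drawing; conversely, from a $c$-planar drawing I recover the spine order (consistent with $T$ by the nesting argument) and the page assignment (from which side each edge leaves the spine), and the absence of same-page crossings gives the non-alternation condition.

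\medskip

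\noindent I expect the main obstacle to be making the ``two pages = two half-planes'' enforcement watertight \emph{simultaneously} with the non-flat hierarchy that encodes $T$: in the pure star case of~\cite{hn-tpbecgp-09} the spine order is essentially free, so one only needs the page gadget, whereas here the page clusters and the $T$-mimicking clusters must coexist without their regions interfering, and I must ensure that the gadget forcing an edge onto its page does not itself create a spurious crossing with the subtree-cluster boundaries. Getting the global enclosing cluster and the two page clusters to partition the plane into a clean ``top page'' and ``bottom page'' around a spine whose internal structure is dictated by a nested family of regions is the technically demanding step, and it is the part I would verify most carefully when proving both directions of the equivalence.
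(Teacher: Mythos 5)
Your high-level architecture coincides with the paper's: reduce from \ptcTWOpbepshort, use two clusters to play the role of the two pages (as in the star case of~\cite{hn-tpbecgp-09}), and use a nested cluster hierarchy mirroring $T$ to impose exactly the consecutivity constraints of a $T$-coherent order. However, the proposal stops at the level of intentions precisely where the proof has to do work, and the one concrete design decision you do commit to is problematic. Chaining the spine vertices ``into a prescribed ordering'' with a path or cycle is either vacuous or wrong: a path does not fix a geometric order in a planar drawing, and a cycle through the leaves in a fixed order would over-constrain the instance, since the spine order is the unknown that the reduction must leave free (subject only to the $T$-consecutivities). In the paper the leaf gadgets are deliberately pairwise disconnected; the linear order is not imposed by edges at all, but is read off, in the backward direction, from the order in which the gadgets cross the boundary of a connected cluster region.

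The missing ingredient is the rigid frame. The paper initializes $G$ to a subdivision $H$ of a $3$-connected planar graph containing, for each of three ``layers'' $\alpha$ (top-page attachment), $\rho$ (spine), $\beta$ (bottom-page attachment), a pair of designated vertices whose only common face is a single face $f$ of $H$; putting each pair into one cluster forces all three cluster regions, and hence all leaf gadgets, into $f$, stacked in a fixed top-to-bottom order. This is what makes the ``two pages = two half-planes'' enforcement watertight and what keeps the page clusters and the $T$-mimicking clusters (the latter nested inside the spine cluster $\mu_\rho$) from interfering --- exactly the step you flag as unverified. Concretely, each leaf $v_i$ becomes a three-vertex path with one vertex per layer, and each edge of $E_1$ (resp.\ $E_2$) is subdivided by a vertex assigned to the bottom (resp.\ top) page cluster; in the backward direction, the fact that the edges entering a connected, crossing-free cluster region are consecutive around its boundary is what converts planarity of the drawing into the non-alternation condition of the book embedding. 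Without the frame and these subdivision gadgets the equivalence does not follow: nothing in your construction prevents, for instance, both page clusters from being realized on the same side of the spine. So the approach is the right one, but the proof is not yet there.
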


\begin{proof}
  Let \ptcTWOpbeinstance be an instance of \ptcTWOpbepshort
  (corresponding to instance \sefeinstance of \csefep) and let $r$ be
  the root of $T$. We describe how to construct an equivalent instance
  $C(G,\mathcal{T})$ of \cpp starting from \ptcTWOpbeinstance. Refer
  to Fig~\ref{fig:cplanarity}.

\begin{figure}[tb]
  \centering
  \subfigure[\ptcTWOpbeinstance]{\includegraphics[width=.39\textwidth]{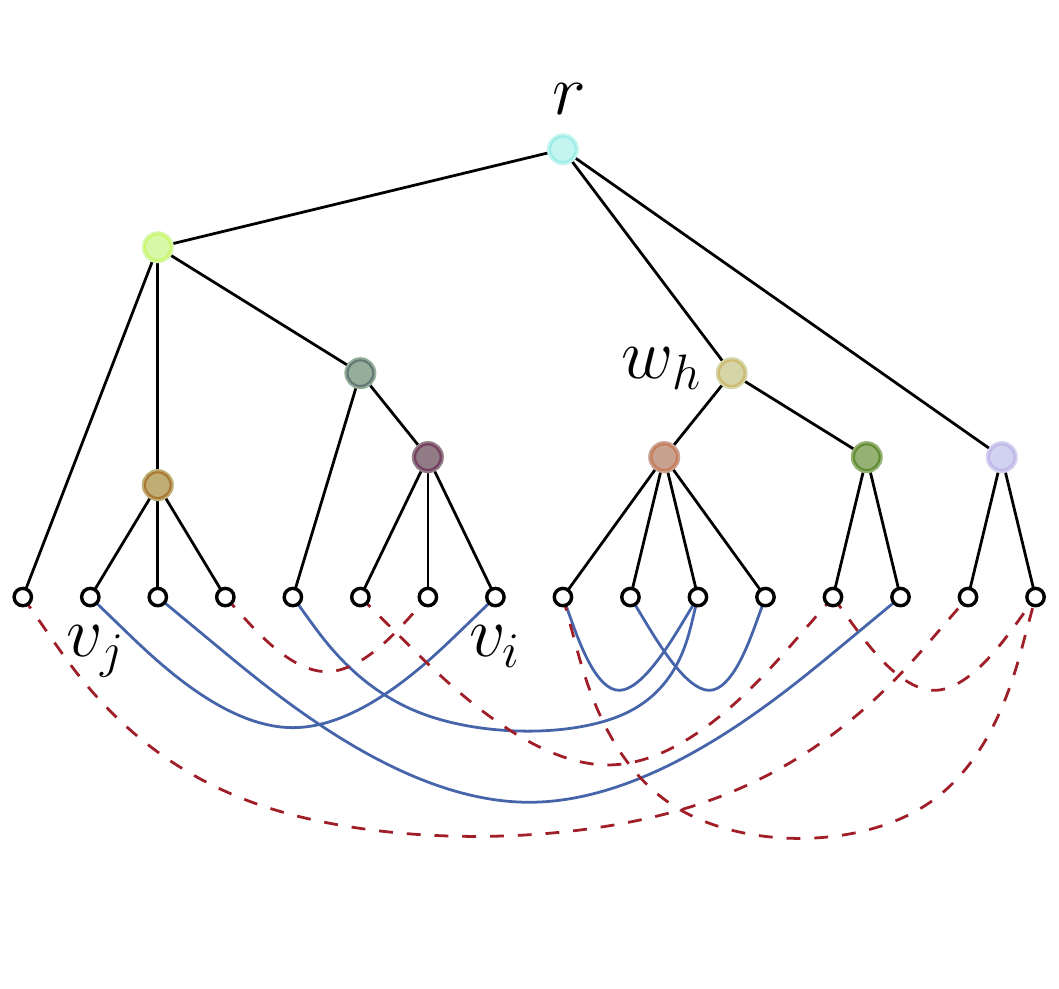}\label{fig:sefeinstance}}
  \subfigure[$C(G,\mathcal{T})$]{\includegraphics[width=.6\textwidth]{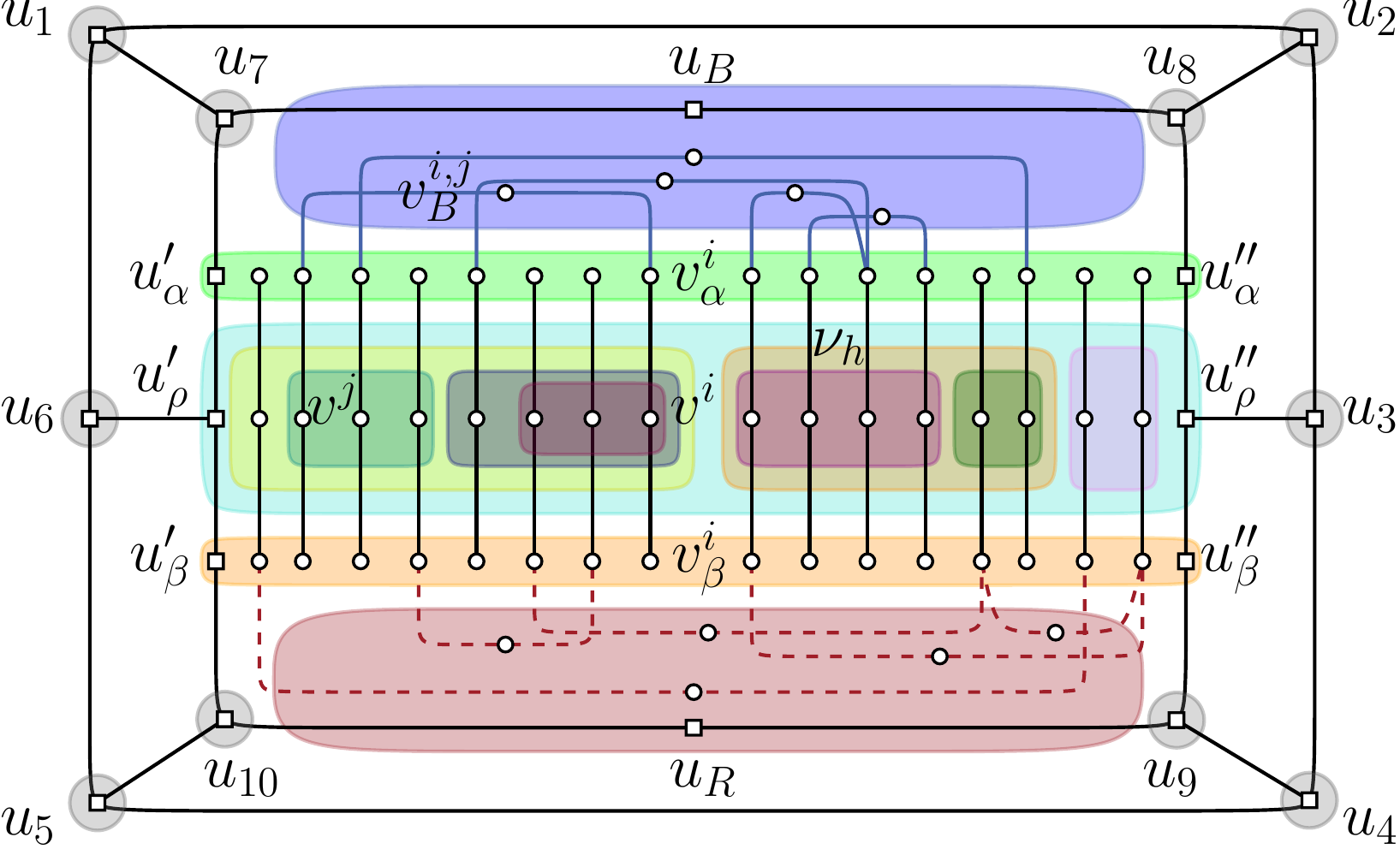}\label{fig:cppinstance}}
  \caption{Illustration of the reduction from \csefep to
    \cpp. Correspondence between internal vertices of $T$ and clusters
    of $\mathcal{T}$ is encoded with colors. The root cluster
    $\lambda$ is not represented in (b).}\label{fig:cplanarity}
\end{figure}

Initialize $G$ to a graph $H$ composed of two cycles $C_1=\langle
u_1,u_2,u_3,u_4,u_5,u_6 \rangle$ and $C_2=\langle
u_7,u_B,u_8,u''_\alpha,u''_\rho,u''_\beta,u_9,u_R,u_{10},u'_\beta,u'_\rho,u'_\alpha
\rangle$, and of edges $(u_1,u_7)$, $(u_2,u_8)$, $(u_3,u''_\rho)$,
$(u_4,u_9)$, $(u_5,u_{10})$, and $(u_6,u'_\rho)$. Observe that $H$ is
a subdivision of a $3$-connected planar graph.

Initialize $\mathcal{T}$ to a tree only composed of a root $\lambda$.
For $m=1,\dots,10$, add a cluster $\mu_m$ to $\mathcal{T}$ as a child
of $\lambda$, containing only vertex $u_m$. Also, add clusters $\mu_B$
and $\mu_R$ to $\mathcal{T}$ as children of $\lambda$, containing
vertices $u_B$ and $u_R$, respectively. Finally, for $\sigma \in
\{\alpha,\rho,\beta\}$, add a cluster $\mu_\sigma$ to $\mathcal{T}$ as
a child of $\lambda$, containing vertices $u'_\sigma$ and
$u''_\sigma$.

Then, consider each internal vertex $w_h$ of $T$ according to a
top-down traversal of $T$ and add to $\mathcal{T}$ a cluster $\nu_h$
either as a child of cluster $\nu_k$, if $w_k \neq r$ is the parent of
$w_h$ in $T$, or as a child of cluster $\mu_\rho$, if $r$ is the
parent of $w_h$ in $T$.  Also, for each leaf vertex $v_i$ of $T$, add
to $G$ a path $(v_\alpha^i, v^i, v_\beta^i)$, that we call {\em
  leaf-path}. Add vertices $v_\alpha^i$ and $v_\beta^i$ to clusters
$\mu_\alpha$ and $\mu_\beta$, respectively; add $v^i$ to cluster
$\nu_h$, if $w_h$ is the parent of $v_i$ in $T$, or to cluster
$\mu_\rho$, if $r$ is the parent of $v_i$ in $T$.

Finally, for each edge $(v_i,v_j)$ in \Er or in \Eb, add to $G$ path
$(v_\beta^i,v_R^{i,j},v_\beta^j)$ or path
$(v_\alpha^i,v_B^{i,j},v_\alpha^j)$, respectively, that we call {\em
  edge-paths}.  Add each vertex $v_R^{i,j}$ to $\mu_R$ and each vertex
$v_B^{i,j}$ to $\mu_B$.

Suppose that \ptcTWOpbeinstance admits a {\sc SEFE} \sefesolution. We
show how to construct a c-planar drawing $\Gamma$ of
$C(G,\mathcal{T})$. We will construct the drawing of $G$ contained in
$\Gamma$ as a straight-line drawing; hence, we only describe how to
place the vertices of $G$.  \ifllncsvar Refer to
Fig.~\ref{fig:sefe-cp-drawing}.

\begin{figure}[htb]
  \centering
  \includegraphics[width=.45\textwidth]{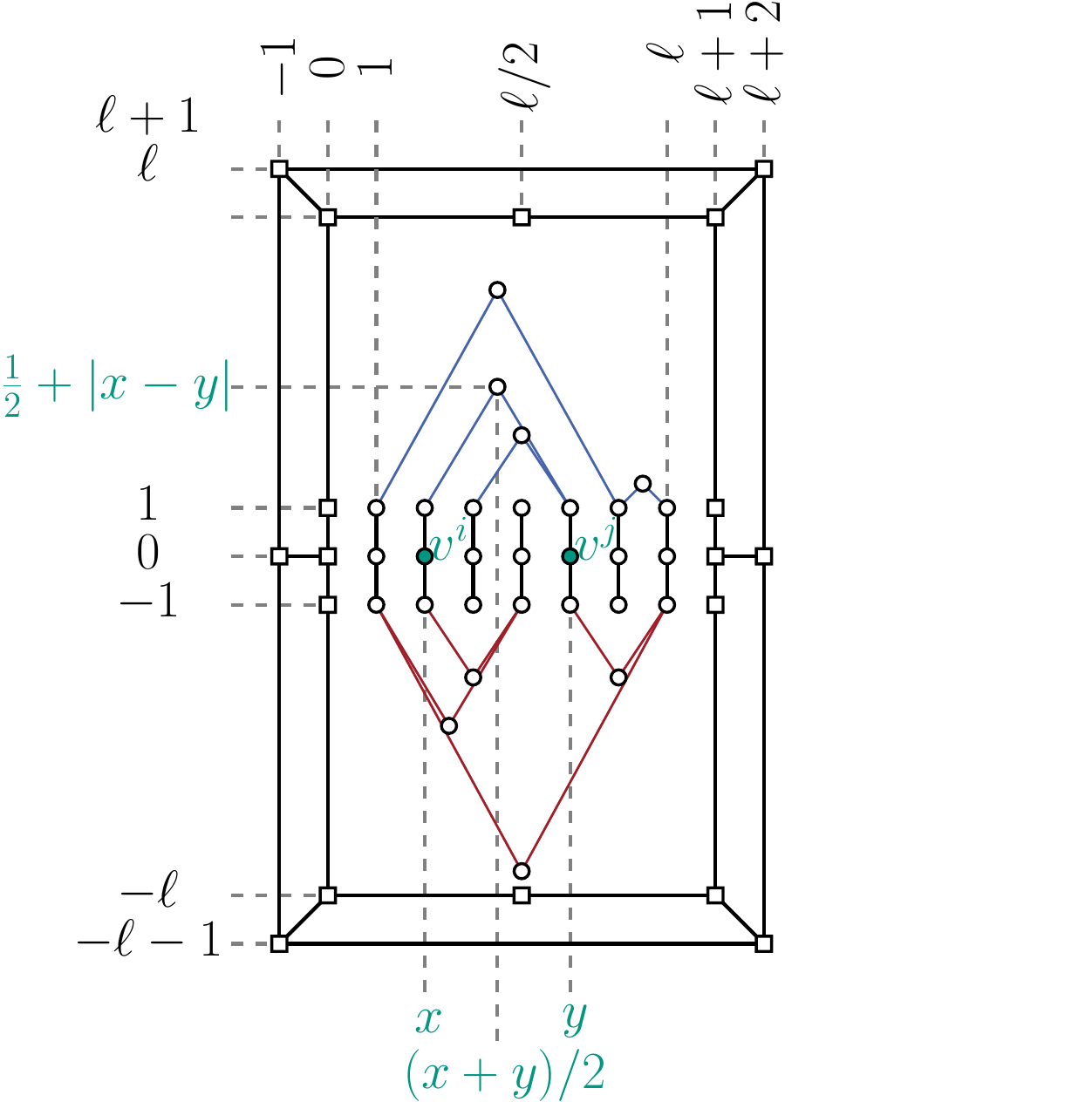}
  \caption{Construction of a c-planar drawing of $C(G,\mathcal{T})$
    starting from \sefesolution, where $x=\phi(v_i)$ and
    $y=\phi(v_j)$.}\label{fig:sefe-cp-drawing}
\end{figure}
\fi

Let $\ell=|\mathcal{L}(T)|$. We first consider cycle $C_1$. Place
vertex $u_1$ at point $(-1,\ell+1)$, $u_2$ at $(\ell+2,\ell+1)$, $u_3$
at $(\ell+2,0)$, $u_4$ at $(\ell+2,-\ell-1)$, $u_5$ at $(-1,-\ell-1)$,
and $u_6$ at $(-1,0)$. Then, we consider cycle $C_2$. Place vertex
$u_7$ at point $(0,\ell)$, $u_B$ at $(\frac{\ell}{2},\ell)$, $u_8$ at
$(\ell+1,\ell)$, $u''_\alpha$ at $(\ell+1,1)$, $u''_\rho$ at
$(\ell+1,0)$, $u''_\beta$ at $(\ell+1,-1)$, $u_9$ at $(\ell+1,-\ell)$,
$u_R$ at $(\frac{\ell}{2},-\ell)$, $u_{10}$ at $(0,-\ell)$, $u'_\beta$
at $(0,-1)$, $u'_\rho$ at $(0,0)$, $u'_\alpha$ at $(0,1)$.

Consider the circular order of the leaves of $T$ determined by
\sefesolution and consider two adjacent leaves $v'$ and $v''$ such
that the lowest common ancestor of $v'$ and $v''$ in $T$ is the root
$r$ (note that, if $r$ has degree greater than $1$, there always exist
two such vertices; otherwise, we can obtain an equivalent instance of
SEFE by removing $r$ from $T$). Consider the linear order
$\mathcal{O}$ of the leaves of $T$ such that $v'$ and $v''$ are the
first and the last element of $\mathcal{O}$.  Let
$\phi:\mathcal{L}(T)\rightarrow{1,\dots,\ell}$ be a function such that
$\phi(v_i)=k$ if $v_i$ is the $k$-th element in $\mathcal{O}$. For
each leaf vertex $v_i$, we draw leaf-path $(v_\alpha^i, v^i,
v_\beta^i)$ by placing vertex $v^i$ at point $(x,0)$, $v_\alpha^i$ at
$(x,1)$, and $v_\beta^i$ at $(x,-1)$, where $x=\phi(v_i)$.  Then, for
each edge $(v_i,v_j) \in \blue{E_2}$, we draw edge-path
$(v^i,v^{i,j}_B,v^j)$ by placing vertex $v_B^{i,j}$ at point
$(\frac{x+y}{2},\frac{1}{2}+|x-y|)$, where $x=\phi(v_i)$ and
$y=\phi(v_j)$. Symmetrically, for each edge $(v_i,v_j) \in \red{E_1}$,
we draw edge-path $(v^i,v^{i,j}_R,v^j)$ by placing vertex $v_R^{i,j}$
at point $(\frac{x+y}{2},-\frac{1}{2}-|x-y|)$, where $x=\phi(v_i)$ and
$y=\phi(v_j)$.

Finally, we draw the region representing each cluster. Consider each
cluster $\mu \in \mathcal{T}$ according to a bottom-up traversal and
draw $\mu$ as an axis-parallel rectangular region enclosing all and
only the vertices and clusters in the subtree of $\mathcal{T}$ rooted
at $\mu$. Observe that, this is always possible. Namely, for clusters
$\mu_m$, with $m=1,\dots,10$, and clusters $\mu_B$, $\mu_R$,
$\mu_\alpha$, and $\mu_\beta$ this directly follows from the
construction. Also, for each cluster $\nu_h$ corresponding to an
internal vertex $w_h$ of $T$, this descends from the fact that the
ordering of the leaves of $T$ is determined by a SEFE
\sefesolution. Indeed, since the drawing of $T$ is planar in
\sefesolution, for any two vertices $v^i$ and $v^j$ of $G$ belonging
to the same cluster, there exists no vertex $v^k$, with
$\phi(v_i)<\phi(v_k)<\phi(v_j)$, belonging to a different cluster.
Since all leaf-paths are drawn as vertical segments, this implies that
no edge-region crossing occurs between a cluster $\nu_h$ and a
leaf-path. Once all clusters $\nu_h$ have been drawn, cluster
$\mu_\rho$ can be drawn to enclose all and only such clusters.

Further, observe that there exist no two edge-paths
$(v^i_\alpha,v^{i,j}_B,v^j_\alpha)$ and
$(v^h_\alpha,v^{h,k}_B,v^k_\alpha)$, corresponding to edges
$(v_i,v_j)$ and $(v_p,v_q)$ of \Eb, such that pairs $\langle
v_i,v_j\rangle$ and $\langle v_p, v_q\rangle$ alternate in
$\mathcal{O}$. Hence, any two edge-paths are either disjoint or
nested. In both cases, by construction, they do not cross (see
Fig.~\ref{fig:sefe-cp-drawing} for an illustration of the two cases).
Similarly, it can be proved that edge-paths corresponding to edges of
\Er do not cross.  This concludes the proof that $\Gamma$ is a
c-planar drawing of $C(G,\mathcal{T})$.

Suppose that $C(G,\mathcal{T})$ admits a c-planar drawing $\Gamma$. We
show how to construct a SEFE \sefesolution of \ptcTWOpbeinstance.
First, observe that all leaf-paths entirely lie inside the face $f$ of
$H$ delimited by cycle $C_2$, as $f$ is the only face of $H$ shared by
$u'_\alpha$, $u'_\rho$, $u'_\beta$, $u''_\alpha$, $u''_\rho$, and
$u''_\beta$. Since all vertices $v^{i,j}_B$ and $v^{i,j}_R$ are
adjacent to vertices of leaf-paths, they also lie inside $f$. Further,
since for $\sigma \in \{\alpha,\rho,\beta\}$ cluster $\mu_\sigma$ is
represented by a connected region enclosing vertices $u'_\sigma$ and
$u''_\sigma$ and not involved in any edge-region and region-region
crossing, all the edges connecting vertices of $\mu_\sigma$ to
vertices of the same cluster are consecutive in the order of the edges
crossing the boundary of $\mu_\sigma$.  This implies that the order in
which leaf-paths cross the boundary of $\mu_\alpha$ is the reverse of
the order in which they cross the boundary of $\mu_\beta$, since no
two leaf-paths cross each other in $\Gamma$.  To obtain \sefesolution,
we order the leaves $v_i$ of $T$ according to the order in which
leaf-paths cross the boundary of $\mu_\alpha$. Let $\mathcal{O}$ be
such an order.

First, we show that $\mathcal{O}$ can be represented by $T$, which
implies that a planar drawing $\Gamma_T$ of $T$ exists respecting
$\mathcal{O}$. In fact, by construction, for each internal vertex
$w_h$ of $T$, the leaves of the subtree $T(w_h)$ of $T$ rooted at
$w_h$ belong to the same cluster $\nu_h$. Also, since $\Gamma$ is
c-planar, all the leaf-paths $(v^i_\alpha,v^i,v^i_\beta)$ such that
$v_i$ is a leaf of $T(w_h)$ are consecutive in the order in which
leaf-paths cross the boundary of $\mu_\alpha$ and hence the
corresponding leaves $v_i$ are consecutive in $\mathcal{O}$.  Second,
we show how to construct two planar drawings \GammaR and \GammaB of
\Gr and \Gb, respectively, such that the drawing of $T$ contained in
\GammaR and in \GammaB coincides with $\Gamma_T$.  We describe the
algorithm to construct \GammaB, the algorithm for \GammaR being
analogous. Consider two edges $(v_i,v_j)$ and $(v_p,v_q)$ of \Eb.
Since the drawing of $G$ in $\Gamma$ is planar, the corresponding
edge-paths $(v^i_\alpha,v^{i,j}_B,v^j_\alpha)$ and
$(v^p_\alpha,v^{p,q}_B,v^q_\alpha)$ do not intersect in $\Gamma$.
Also, since the edges belonging to edge-paths are consecutive in the
order in which edges incident to vertices of $\mu_\alpha$ cross the
boundary of $\mu_\alpha$, the pair of leaves $\langle v_i,v_j\rangle$
and $\langle v_p, v_q\rangle$ of $T$ corresponding to vertices
$v^i_\alpha$, $v^j_\alpha$, $v^p_\alpha$, and $v^q_\alpha$ do not
alternate in $\mathcal{O}$.  Hence, \GammaB can be obtained from
$\Gamma_T$ by drawing the edges of \Eb as curves intersecting neither
edges of $T$ nor other edges in \Eb.  Since the drawing of $G_\cap=T$
is the same in \GammaR and in \GammaB, \sefesolution is a SEFE of
\ptcTWOpbeinstance.  This concludes the proof of the theorem. \qed
\end{proof}

\ifllncsvar
In the following we prove that the reduction of Theorem~\ref{th:non-flat} can be modified in such a way that the resulting instance of \cpp is flat and the underlying graph consists of a set of paths.
\fi

\begin{theorem}\label{th:flat}
  \csefep $\propto$ \cpp with flat cluster hierarchy and underlying
  graph that is a set of paths.
\end{theorem}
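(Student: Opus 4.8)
The plan is to reuse the gadget of Theorem~\ref{th:non-flat}, keeping the two ``pages'' realized by the clusters $\mu_\alpha$ and $\mu_\beta$ (with the bend‑vertex clusters $\mu_B,\mu_R$), but to make two changes: (i) flatten the nested family $\{\nu_h\}$ into a family of clusters that are all children of the root $\lambda$, and (ii) replace the rigid skeleton $H$ and remove the shared high‑degree endpoints so that the underlying graph becomes a disjoint union of paths. The $\alpha/\beta$ construction already confines blue edge‑paths to the $\alpha$‑side and red edge‑paths to the $\beta$‑side, so the two page (non‑alternation) constraints carry over; the real work is in re‑encoding the tree $T$ with a flat hierarchy and in lowering the maximum degree. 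The resulting instance has size polynomial in that of \ptcTWOpbeinstance, since the total number of added vertices is bounded by $|\mathcal{L}(T)|$ times the height of $T$.

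To flatten the hierarchy I would stop nesting the clusters $\nu_h$ and make every $\nu_h$ a child of $\lambda$. The laminar consecutiveness previously enforced by the nesting is moved into the graph: for each leaf $v_i$, let $w_{h_1},\dots,w_{h_t}$ be its internal ancestors listed from the parent of $v_i$ up to the highest proper ancestor below $r$, and replace the single interior vertex $v^i$ of its leaf‑path by a sub‑path $v^i_{h_1},\dots,v^i_{h_t}$, assigning $v^i_{h_j}$ to the flat cluster $\nu_{h_j}$. Since all leaf‑paths list their ancestor‑copies in the same depth order, in any c‑planar drawing they cross the boundary of $\mu_\alpha$ in a consistent linear order $\mathcal{O}$, and the requirement that each edge meets a region at most once forces the copies lying in a common $\nu_h$ to occupy a contiguous block of $\mathcal{O}$. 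As the blocks coming from $T$ form a laminar family this is exactly $T$‑coherence; conversely any $T$‑coherent order can be drawn because the blocks nest, so the flat family $\{\nu_h\}$ reproduces precisely the constraints of the nested one.

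To turn $G$ into a set of paths I would first discard the cycles $C_1,C_2$ of $H$, whose only purpose was to pin the embedding and separate the $\alpha$‑side from the $\beta$‑side; this separation is now provided combinatorially by $\mu_\alpha$ and $\mu_\beta$, which keep the blue and red bends on opposite sides of the leaf‑path forest. Second, I would eliminate the remaining high‑degree vertices: wherever a leaf‑path endpoint $v^i_\alpha$ (resp. $v^i_\beta$) is the shared endpoint of several blue (resp. red) edge‑paths, I would split it into a short path of degree‑$2$ attachment vertices, one per incident edge‑path, all placed in $\mu_\alpha$ (resp. $\mu_\beta$) so that they are forced consecutive. After these modifications every component has maximum degree $2$ and contains no cycle, i.e. $G$ is a set of paths, while the cluster hierarchy has height $2$ and is therefore flat.

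The forward direction (from a SEFE \sefesolution to a c‑planar drawing) is obtained as in Theorem~\ref{th:non-flat} by an explicit straight‑line placement, with the ancestor‑copies stacked by depth and the split attachment vertices drawn as tiny consecutive blocks inside $\mu_\alpha$ and $\mu_\beta$. The delicate part, which I expect to be the main obstacle, is the reverse direction: without the $3$‑connected skeleton $H$ there is no rigid frame fixing the faces, so I must argue purely from the flat clusters that any c‑planar drawing still induces a single consistent leaf order, that this order is $T$‑coherent, and that the blue and red edge sets are each non‑alternating in it. The key steps are to show that (a) the leaf‑paths cannot cross, hence emanate from the boundary of $\mu_\alpha$ in a consistent order $\mathcal{O}$; (b) each $\nu_h$ forces its block to be contiguous in $\mathcal{O}$, via the laminar argument above; and (c) since all blue bends lie in $\mu_\alpha$ and all red bends in $\mu_\beta$, a blue and a red edge never interfere, so the two non‑alternation conditions decouple into the two pages of the book embedding. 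Establishing (a) and the side‑separation without the crutch of the rigid skeleton is where the construction must be set up most carefully, possibly by anchoring the two sides with a small number of auxiliary degree‑$2$ path‑gadgets placed in $\mu_\alpha$ and $\mu_\beta$.
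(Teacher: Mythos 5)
Your flattening step is essentially the paper's: the laminar consecutivity constraints of the nested clusters $\nu_h$ are pushed into the graph by subdividing the leaf-paths with one auxiliary vertex per relevant cluster, each such vertex living in a cluster that is a child of the root, so that every original nested cluster is simulated by a flat cluster imposing the same consecutivity constraint on the order in which the leaf-paths cross the boundary of $R(\mu_\alpha)$. That part of your plan is sound and matches the paper in substance.

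The genuine gap is in how you obtain a union of paths. You propose to \emph{discard} the skeleton $H$ (the two cycles $C_1$, $C_2$ and their connecting edges) and hope that $\mu_\alpha$ and $\mu_\beta$ alone re-establish the separation of the two pages and a single consistent leaf order --- and you yourself flag the reverse direction as an unresolved ``delicate part'' to be fixed by unspecified ``auxiliary path-gadgets''. Without $H$ nothing prevents a c-planar drawing in which the leaf-paths are not confined to a common face with $\mu_\alpha$ on one side and $\mu_\beta$ on the other, so steps (a) and (c) of your reverse direction do not follow; as stated, the reduction is not proved correct. The paper avoids this entirely: it \emph{keeps} $H$ and instead destroys its high-degree vertices by first wrapping each such vertex $t$ (the $u_m$'s, $u'_\rho$, $u''_\rho$, and the high-degree $v^i_\alpha$, $v^i_\beta$) in a singleton cluster, then replacing $t$ by $\deg(t)$ degree-one copies inside that cluster. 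Every vertex then has degree at most $2$ and all cycles are broken, so the underlying graph is a set of paths; yet in any c-planar drawing the copies can be merged back into a single vertex inside the (connected, edge-crossed-once) cluster region, which recovers the rigid frame of Theorem~\ref{th:non-flat} and makes the reverse direction go through unchanged. If you want to salvage your version, the ``auxiliary gadgets'' you allude to would have to amount to exactly this: a path-decomposed copy of $H$ whose integrity is enforced by singleton clusters.
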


\newcommand{\proofthflat}{ Let \ptcTWOpbeinstance be an instance of
  \csefep. We describe how to construct an equivalent instance
  $C(G,\mathcal{T})$ of \cpp with flat cluster hierarchy and
  underlying graph that is a set of paths starting from
  \ptcTWOpbeinstance.

  First, we construct an instance $C^*(G^*,\mathcal{T}^*)$ of \cpp
  with non-flat cluster hierarchy by applying the reduction shown in
  Theorem~\ref{th:non-flat}.  We describe how to transform $C^*$ into
  an equivalent instance $C(G,\mathcal{T})$ of \cpp with the required
  properties.

  For vertices $u'_\rho$, $u''_\rho$, and for all vertices
  $v^i_\alpha$ and $v^i_\beta$ having degree at least $2$, consider
  the parent cluster $\nu$ of any such vertex $v$ in
  $\mathcal{T}$. Add a cluster $\mu_v$ to $\mathcal{T}$ as a child of
  $\nu$ and containing only vertex $v$. The obtained instance
  $C'(G'=G,\mathcal{T}')$ is obviously equivalent to $C^*$.

  Let $\Delta$ be the set of all clusters $\tau \in \mathcal{T}'$ such
  that $\mathcal{T}'(\tau)$ has only one leaf $t$. Note that, $\Delta$
  consists of all clusters $\mu_m$, with $m=1,\dots,10$, and all
  clusters added at the previous step to obtain $C'$. For each cluster
  $\tau \in \Delta$, we perform the following procedure. For each edge
  $(t,z)$ of $G'$ such that $t \in \tau$, add a vertex $t_z$ to $\tau$
  and add edge $(t_z,z)$ to $G'$. Finally, remove vertex $t$ and its
  incident edges from $C'$. This can be seen as replacing $t$ with
  $\deg(t)$ copies of it. For simplicity, in the following we keep the
  same notation $(v^i_\alpha,v^i,v^i_\beta)$ for leaf-paths, and
  $(v^i_\alpha,v^{i,j}_B,v^j_\alpha)$ and $(v^i_\beta,v^{i,j}_R,v^j_\beta)$ for
  edge-paths, where their endvertices have been
  naturally replaced by the appropriate copy.  See
  Fig.~\ref{fig:A-paths} for an illustration of this step.

\begin{figure}[tb]
  \centering
  \subfigure[$C''(G'',\mathcal{T}'')$]{\includegraphics[width=.49\textwidth]{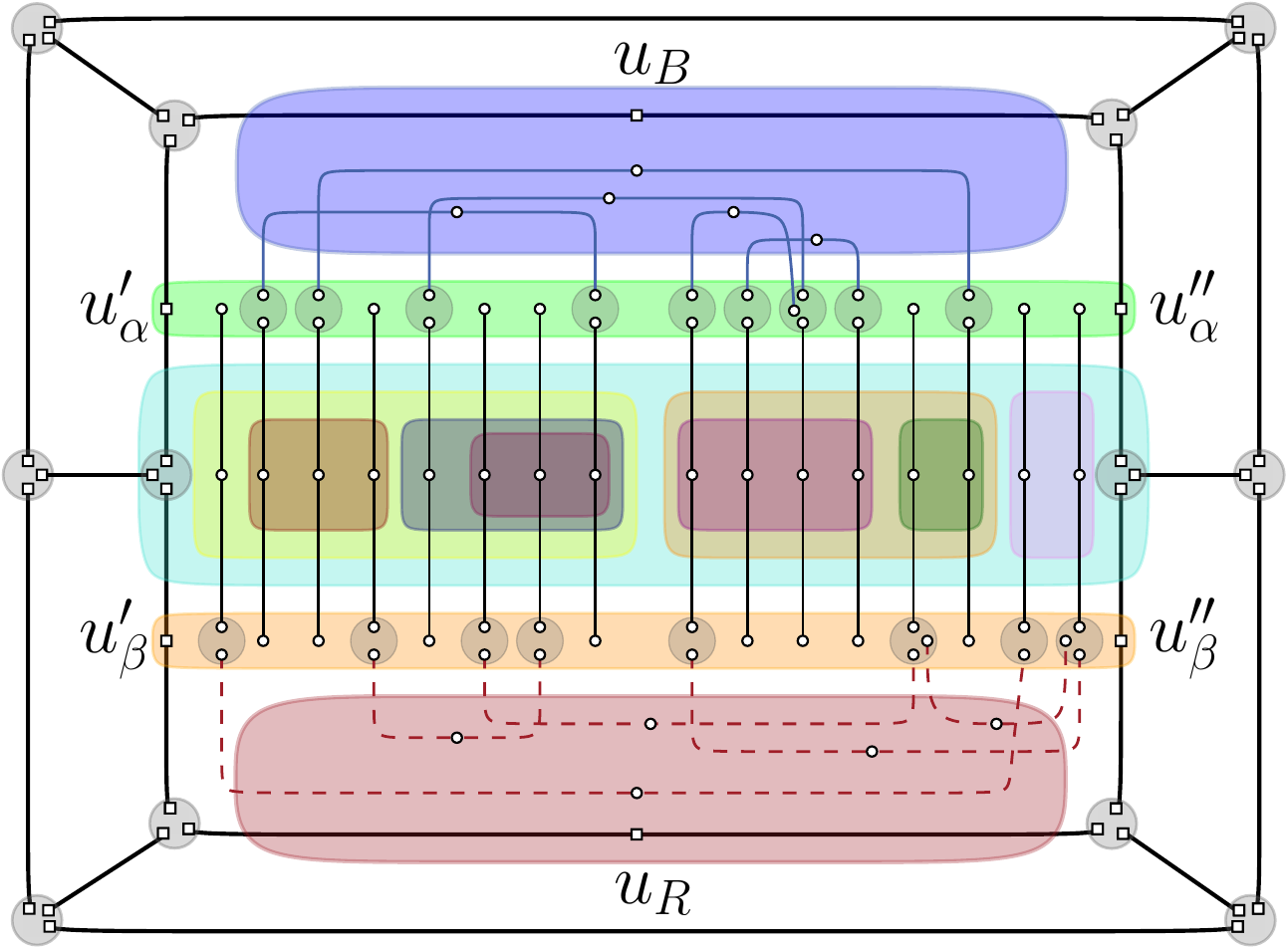}\label{fig:A-paths}}
  \subfigure[$C(G,\mathcal{T})$]{\includegraphics[width=.49\textwidth]{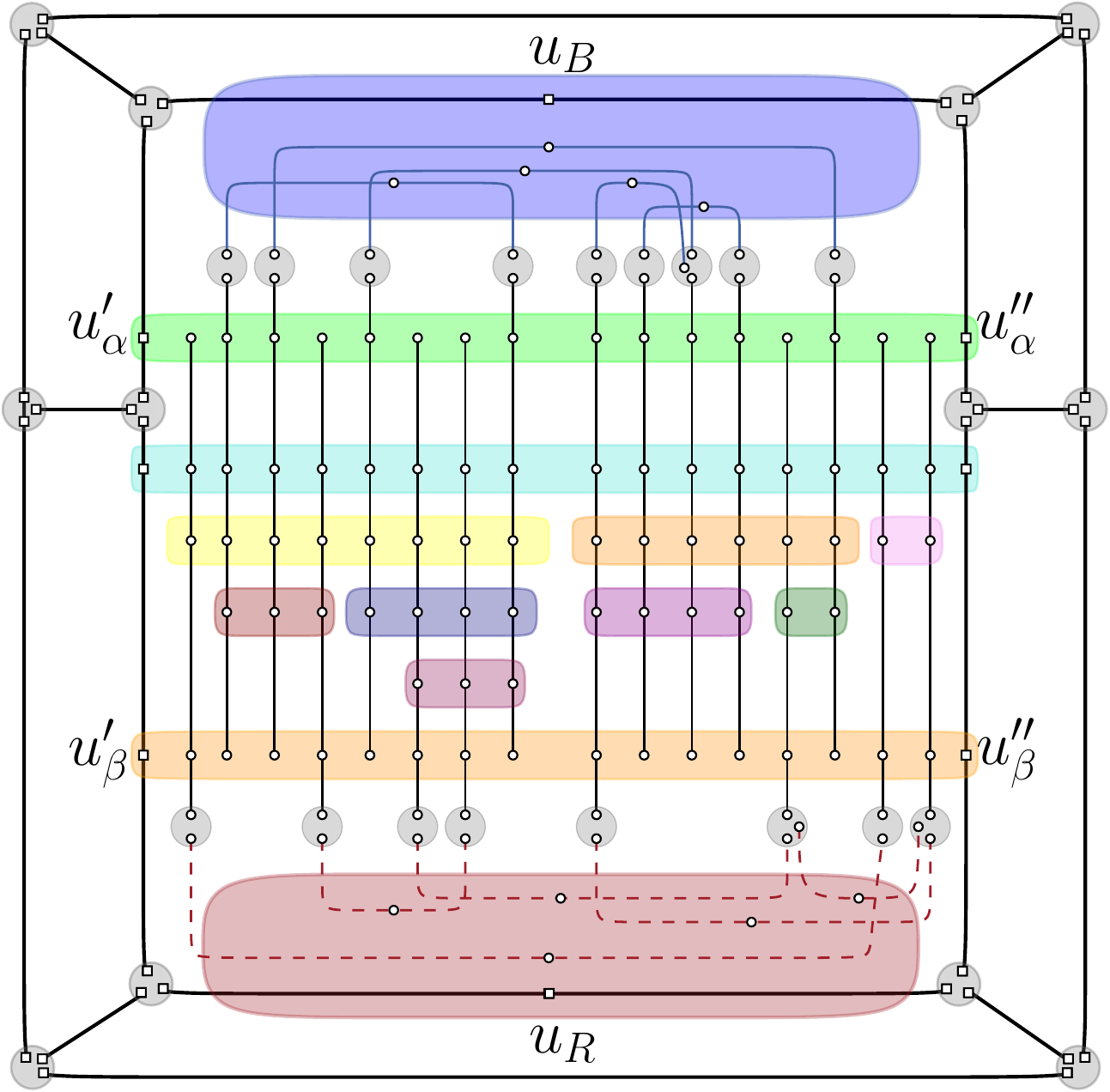}\label{fig:B-flat}}
  \caption{Construction of an equivalent instance $C(G,\mathcal{T})$
    with the desired properties. (a) Obtaining an instance whose
    underlying graph is a set of paths. (b) Obtaining a flat
    instance.}
\end{figure}

Observe that, the constructed instance $C''(G'',\mathcal{T}'')$ is
such that $G''$ consists of a set of paths. In fact, after performing
the two steps described above, each vertex of $G''$ has either degree
$1$ or degree $2$. Also, every vertex of degree $2$ is the middle
vertex of either a leaf-path or an edge-path.  Hence, no cycle is
created. Further, $C''(G'',\mathcal{T}'')$ is equivalent to $C'$, as
in any c-planar drawing of $C''$ a vertex $t$ that has been removed
from a cluster $\tau$ can be reinserted inside $R(\tau)$ and connected
to all the vertices of $\tau$ while maintaining c-planarity (the other
direction being trivial).

We now show how to construct instance $C$ starting from $C''$. For
each vertex $v^i_\alpha$ whose parent $\tau^i_\alpha$ in
$\mathcal{T}''$ is different from $\mu_\alpha$, we subdivide edge
$(v^i_\alpha,v^i)$ with a vertex $z^i_\alpha$; we add $z^i_\alpha$ to
cluster $\mu_\alpha$; and we remove $\tau^i_\alpha$ from the children
of $\mu_\alpha$ and add $\tau^i_\alpha$ as a child of the root
$\lambda$. For each vertex $v^i_\beta$ whose parent $\tau^i_\beta$ in
$\mathcal{T}''$ is different from $\mu_\beta$, we subdivide edge
$(v^i_\beta,v^i)$ with a vertex $z^i_\beta$; we add $z^i_\beta$ to
cluster $\mu_\beta$; and we remove $\tau^i_\beta$ from the children of
$\mu_\beta$ and add $\tau^i_\beta$ as a child of the root $\lambda$.

Let $\mu'$ and $\mu''$ be the parent clusters of the $3$ copies of
$u'_\rho$ and $u''_\rho$, respectively, in $\mathcal{T}''$. Subdivide
the edge connecting a vertex in $\mu'$ to $u'_\beta$ with a new vertex
and the edge connecting a vertex in $\mu''$ to $u''_\beta$ with a new
vertex, and add both such vertices to $\mu_\rho$. Also, remove $\mu'$
and $\mu''$ from the children of $\mu_\rho$ and add them as children
of the root $\lambda$.

Further, as long as there exists a cluster $\mu\neq \mu_\rho \in
\mathcal{T}''(\mu_\rho)$ such that all the children of $\mu$ are
leaves, we perform the following procedure. We add a new cluster
$\mu'$ to $\mathcal{T}''$ as a child of the root $\lambda$. Consider
the parent $\nu$ of $\mu$ in $\mathcal{T}''$. For each vertex $v^i \in
\mu$, we remove $v^i$ from the children of $\mu$ and add it as a child
of $\nu$; also, we subdivide the unique edge $(v^i_\beta,x)$ incident
to $v^i_\beta$ with a new vertex that we add to cluster
$\mu'$. Finally, we remove $\mu$ from $\mathcal{T}''$. The instance
$C(G,T)$ obtained by applying the reduction to \csefep instance of
Fig.~\ref{fig:sefeinstance} can be seen in Fig.~\ref{fig:B-flat}. In
order to prove that $C$ is equivalent to $C''$, observe that paths
$(v^i_\alpha,z^i_\alpha,v^i,z^i_\beta,\dots,v^i_\beta)$ obtained from
leaf-paths $(v^i_\alpha,v^i,v^i_\beta)$ are bounded to cross the
boundary of $R(\mu_\alpha)$ in $C$ in the same order in which the
corresponding leaf-paths are bounded to cross the boundary of
$R(\mu_\alpha)$ in $C''$. Namely, for each cluster $\mu \in
\mathcal{T}''$ there exists a cluster $\mu' \in \mathcal{T}$ imposing
the same consecutivity constraint on the ordering in which paths cross
the boundary of $R(\mu_\alpha)$.  This concludes the proof of the
theorem. \qed }

\ifllncsvar
\begin{proof}
  \proofthflat
\end{proof}
\fi

\begin{corollary}\label{co:flat}
  \ptcTWOpbepshort $\propto$ \cpp with flat cluster hierarchy and
  underlying graph that is a set of paths.
\end{corollary}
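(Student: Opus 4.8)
The plan is to observe that this is an immediate consequence of Theorem~\ref{th:flat} together with the polynomial-time equivalence between \csefep and \ptcTWOpbepshort established in~\cite{adfpr-tsetgibgt-11}. First, I would recall that this equivalence provides, in particular, a polynomial-time reduction \ptcTWOpbepshort $\propto$ \csefep. Composing this reduction with the reduction of Theorem~\ref{th:flat}, namely \csefep $\propto$ \cpp with flat cluster hierarchy and underlying graph that is a set of paths, and invoking the transitivity of polynomial-time reducibility, would yield the desired reduction \ptcTWOpbepshort $\propto$ \cpp with the same restricted target class.

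Alternatively, and more directly, I would point out that the reduction of Theorem~\ref{th:flat} is in fact already carried out on an instance of \ptcTWOpbepshort: its proof begins by taking an instance \ptcTWOpbeinstance, which is identified with the corresponding \csefep instance \sefeinstance only for notational convenience (as announced at the beginning of Section~\ref{se:reduction}). Hence the very same construction, read as a map from \ptcTWOpbepshort instances to \cpp instances $C(G,\mathcal{T})$ with flat cluster hierarchy and underlying graph that is a set of paths, already establishes the corollary without any additional construction or correctness argument, since equivalence of the instances was proved there.

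I do not expect any genuine obstacle here, as the statement is essentially a restatement of Theorem~\ref{th:flat} through a known equivalence. The only point worth making explicit is that the equivalence of~\cite{adfpr-tsetgibgt-11} preserves polynomial running time in both directions, so that the composed reduction remains polynomial; and that neither the flatness of the cluster hierarchy nor the property that the underlying graph is a set of paths is affected by viewing the source as a \ptcTWOpbepshort instance rather than a \csefep instance. I would therefore keep the argument to a single short paragraph that cites Theorem~\ref{th:flat} and the equivalence, and then closes by transitivity.
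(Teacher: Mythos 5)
Your proposal is correct and matches the paper's (implicit) justification: the corollary is stated without proof precisely because the reduction of Theorem~\ref{th:flat} is already formulated on an instance \ptcTWOpbeinstance of \ptcTWOpbepshort, identified with the corresponding \csefep instance only notationally via the equivalence of~\cite{adfpr-tsetgibgt-11}. Both of your routes (transitivity of $\propto$, or reading the construction directly as a map from \ptcTWOpbepshort instances) are sound and are essentially what the paper intends.
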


\section{Conclusions and Open Problems}\label{se:conclusions}
In this paper we show that \csefep is polynomial-time reducible to
\cpp even in the case in which the cluster hierarchy is flat and the
underlying graph is a set of paths.

We regard as an intriguing open question whether a polynomial-time
reduction exists from general instances of \sefep to instances of
\cpp, which would prove, together with the reduction by
Schaefer~\cite{s-ttphtpv-13}, these two problems to be ultimately the
same. Moreover, as our reduction produces instances of \cpp with a
number of clusters depending linearly in the size of the reduced
\csefep instance, it is worth of interest asking whether a sublinear
or constant number of clusters would suffice.

{\bibliography{bibliography}}

\ifllncsvar 
\else

\newpage

\section*{\large Appendix}

In this appendix we report additional details not included in the
extended abstract due to space reasons.

\subsection*{\bf Additional material for the proof of
  Theorem~\ref{th:non-flat}}

\begin{figure}[htb]
  \centering
  \includegraphics[width=.45\textwidth]{img/c-drawing-construction.pdf}
  \caption{Construction of a c-planar drawing of $C(G,\mathcal{T})$
    starting from \sefesolution, where $x=\phi(v_i)$ and
    $y=\phi(v_j)$.}\label{fig:sefe-cp-drawing}
\end{figure}

\subsection*{\bf Full proof of Theorem~\ref{th:flat}}

\medskip
\noindent{\bf Theorem~\ref{th:flat}.} \emph{ \csefep $\propto$ \cpp
  with flat cluster hierarchy and underlying graph that is a set of
  paths.  }

\medskip
\begin{proof}
  \proofthflat
\end{proof}

\fi
\end{document}